\renewcommand{\vec}[1]{\mbox{\boldmath$#1$}}
\begin{document}
\begin{frontmatter}
\begin{CJK*}{GB}{gbsn}
\title{The Quantification of Quantum Nonlocality by Characteristic Function}
\author[label1]{Wei Wen (ÎÄΰ)\corauthref{cor1}},
\corauth[cor1]{Corresponding author:chuxiangzi@semi.ac.cn}
\author[label1]{Shu-Shen Li (ÀîÊ÷Éî)}
\address[label1]{State Key Laboratory for Superlattices and Microstructures, Institute of Semiconductors,
Chinese Academy of Sciences, China}

\begin{keyword}
Quantum Nonlocality, Characteristic Function, Strength of QNC, Quantum Entanglement, Schr\"odinger Steering
\PACS 03.67.Mn; 03.65.Ud
\end{keyword}

\begin{abstract}
Quantum nonlocal correlation (QNC) is thought to be more general than quantum entanglement correlation, but the strength of it has not been well defined. We propose a way to measure the strength of QNC basing on the characteristic function. The characteristic function of QNC in a composite system is defined as a response function under the local quantum measurement. It is explored that once a characteristic function is given, the state of a composite system, with just a local trace-preserving quantum operation uncertainty, will be determined. We show that the strength of QNC basing on the characteristic function is a half-positive-definite function and does not change under any LU operation. For a two-partite pure state, the strength of QNC is equivalent to the quantum entanglement. Generally, we give a new definition for quantum entanglement using the strength function. Furthermore, we also give a separability-criterion for $2\times m$-dimensional mixed real matrix. This letter proposes an alternate way for QNC further research.
\end{abstract}

\maketitle
\end{CJK*}
\end{frontmatter}

% Use the \preprint command to place your local institutional report
% number in the upper righthand corner of the title page in preprint mode.
% Multiple \preprint commands are allowed.
% Use the 'preprintnumbers' class option to override journal defaults
% to display numbers if necessary
%\preprint{1008.2413}

%Title of paper

% repeat the \author .. \affiliation  etc. as needed
% \email, \thanks, \homepage, \altaffiliation all apply to the current
% author. Explanatory text should go in the []'s, actual e-mail
% address or url should go in the {}'s for \email and \homepage.
% Please use the appropriate macro foreach each type of information

% \affiliation command applies to all authors since the last
% \affiliation command. The \affiliation command should follow the
% other information
% \affiliation can be followed by \email, \homepage, \thanks as well.

%\homepage[]{Your web page}
%\thanks{}
%\altaffiliation{}
%\affiliation{ }

%Collaboration name if desired (requires use of superscriptaddress
%option in \documentclass). \noaffiliation is required (may also be
%used with the \author command).
%\collaboration can be followed by \email, \homepage, \thanks as well.
%\collaboration{}
%\noaffiliation

% insert suggested PACS numbers in braces on next line

% insert suggested keywords - APS authors don't need to do this
%\keywords{}

%\maketitle must follow title, authors, abstract, \pacs, and \keywords

% body of paper here - Use proper section commands
% References should be done using the \cite, \ref, and \label commands

One of the most subtle phenomena in quantum theory is quantum nonlocal correlation (QNC). Although a large amount of research on QNC has been done, it has mainly arisen from the view that nonlocality cannot be described by any local hidden variable theory (LHV). Based on this, some new concepts have been proposed, such as Bell nonlocality\cite{Bell}, quantum entanglement \cite{Bennett,Walgate,Werner}, Schr\"{o}dinger's steerability \cite{Spekkens,Kirkpatrick,steering,entanglement} and so on, which are all defined by different forms of the local joint quantum measurement (LJQM) probability $P(a,b|A,B;W)$ \cite{steering,entanglement}. However, the question about what is the QNC is still far from being solved. Till now, although much attention is paid on the information perspective, the physical aspect of the QNC is also worth studying.

Recently, some other researchers have paid attention to other unorthodox methods and  put forward that nonlocality can be more general \cite{Horodecki,Oxley,Linden,lesspurity,induced}. For example, Bandyopadhyay present that the nonlocality can be redefined by local indistinguishability of a set of orthogonal quantum states, and show that more nonlocality may be with less purity \cite{lesspurity}. Luo and Fu point out that the measurement can induce the nonlocality nonlocality \cite{induced}.
These works all try to find a new way to study quantum nonlocality and inspirit the motivation to reinspect the physical action played in the QNC.

In this Letter, we will further study the quantum nonlocal correlation. We regard the QNC as one sort of elements in the state of a composite system, of which the other element is the set of subsystems. We think the QNC in every special composite system has its own character to be distinguished from others. We express the relation between QNC and the state of a composite system with the mathematical language as follows,
\begin{equation}
\mathrm{\mathbb{B}}:\rho_{ABC\cdots}\rightarrow\{\{\rho_A,\rho_B,\rho_C,\cdots\},\{\mathrm{QNC}s\}\}.
\label{jection}
\end{equation}
The function $\mathbb{B}$ is a bijection and every composite state is mapped into the set of subsystems and the QNC between them. Our task is to find a mathematical expression to describe the QNC
\begin{equation}
\mathrm{\mathbb{B}}':\rho_{ABC\cdots}\rightarrow\{\{\rho_A,\rho_B,\rho_C,\cdots\},\vec{\mathrm{F}}_{QNC}\}.
\label{deft}
\end{equation}
We call the function $\vec{\mathrm{F}}$ that maps the abstract physical quantity $\mathrm{QNC}$ to a  mathematica quantity as a characteristic function of QNC. This is because, just like a special fingerprint corresponds a special man, a special function $\vec{\mathrm{F}}$ corresponds a special composite state once the its sub-states are fixed on. Therefore, we can use it to analysis QNC just like using density matrix to analysis the state of system. Additionally,theoretically speaking, we can redefine the concepts Bell nonlocality, quantum entanglement, Schr\"{o}dinger's steerability and so on basing on the characteristic function. In fact, in the following text, we find a new formulation of quantum entanglement.

Before expatiating on the characteristic function, we will introduce some intuitive-right but unobvious conclusions first. An unlimited quantum measurement is defined as a physical process in which the projection operator can be arbitrarily chosen and the number of copies of the unknown state measured is sufficient. Considering a general situation, an arbitrary projection operator in a finite $n$ Hilbert space is expressed as $\hat{M}(\Theta,\Psi)=|\psi(\Theta,\Psi)\rangle\langle\psi(\Theta,\Psi)|$, where $\Theta$ and $\Phi$ are the sets of variables $\{\psi_k\}$ and $\{\theta_k\}$. $|\psi(\Theta,\Phi)\rangle$ is a pure state in $n$-dimensional space, $|\psi(\Theta,\Psi)\rangle=\sum_{k=1}^{n}a_k|k\rangle$, where $a_k=\prod_{l=1}^{k-1}\sin(\theta_l)\cos(\theta_k)e^{\mathrm{i}\phi_l}$ when $k< n$ and $a_n=\prod_{l=1}^{n-1}\sin(\theta_l)$. Under the unlimited quantum measurement, any unknown state $\rho$ can be distinguished out and this is shown in the following lemma.

\newtheorem{theorem}{Theorem}
\newtheorem{lemma}{lemma}
\begin{lemma}
If $\rho_1$, $\rho_2$ are density matrixes and $\hat{M}(\Theta,\Phi)$ is the projection operator in a finite $n$-dimensional Hilbert space, we say $\rho_1=\rho_2$ when $\mathrm{Tr}(\hat{M}(\Theta,\Phi)\rho_1)=\mathrm{Tr}(\hat{M}(\Theta,\Phi)\rho_2)$ for $\forall \theta_{i}\in (0,\pi)$ and $\forall \phi_{i}\in (0,2\pi)$.
\end{lemma}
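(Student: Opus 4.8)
The plan is to recast the hypothesis in terms of the single operator $\Delta = \rho_1-\rho_2$ and then show that vanishing of its quadratic form forces $\Delta=0$. Since $\hat M(\Theta,\Phi)=|\psi(\Theta,\Phi)\rangle\langle\psi(\Theta,\Phi)|$, the assumption $\mathrm{Tr}(\hat M\rho_1)=\mathrm{Tr}(\hat M\rho_2)$ is exactly $\langle\psi(\Theta,\Phi)|\Delta|\psi(\Theta,\Phi)\rangle=0$ for all $\theta_i\in(0,\pi)$ and $\phi_i\in(0,2\pi)$, where $\Delta$ is Hermitian and traceless. Thus the whole lemma reduces to the linear-algebra fact that an operator whose quadratic form vanishes on a sufficiently rich family of pure states must be the zero operator.

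First I would remove the restriction to open angle intervals. Writing the amplitudes $a_k$ in terms of the angles, the expectation $f(\Theta,\Phi)=\sum_{j,k}\bar a_j a_k\,\Delta_{jk}$ is a finite trigonometric polynomial in the $\theta_l$ and $\phi_l$, hence real-analytic on all of $\mathbb{R}^{2(n-1)}$. A real-analytic function vanishing on the nonempty open box $(0,\pi)^{n-1}\times(0,2\pi)^{n-1}$ vanishes identically, and in particular at the limiting angle values. As the angles range over the closed intervals the map $(\Theta,\Phi)\mapsto|\psi(\Theta,\Phi)\rangle$ sweeps out every pure state (every unit vector, up to a global phase that is immaterial for the phase-invariant quadratic form), so I obtain $\langle\psi|\Delta|\psi\rangle=0$ for every normalized $|\psi\rangle$, and then for every vector by homogeneity.

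Next I would invoke the complex polarization identity. For arbitrary $|u\rangle,|v\rangle$, evaluating the now identically vanishing quadratic form on $|u\rangle+|v\rangle$ and on $|u\rangle+\mathrm{i}|v\rangle$ and combining yields $\langle u|\Delta|v\rangle=0$; since $|u\rangle,|v\rangle$ are arbitrary, every matrix element of $\Delta$ vanishes, i.e. $\rho_1=\rho_2$. Concretely one can also bypass polarization and read off the entries directly: choosing the amplitudes so that only $a_j,a_k$ are nonzero isolates $\Delta_{jj}$ and $\Delta_{kk}$, and varying the relative phase $\phi$ then separates the real and imaginary parts of $\Delta_{jk}$.

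The hard part will be the first step, namely justifying the passage from the open parameter box to the full set of pure states: a priori the hypothesis is imposed only for interior angles, which never produce the computational-basis vectors $|k\rangle$ needed to isolate the diagonal entries $\Delta_{kk}$. The cleanest remedy is the analyticity argument above, since a trigonometric polynomial vanishing on an open set vanishes everywhere; an equivalent elementary route is to take one-sided limits $\theta_l\to 0^+$ and $\theta_l\to\pi^-$ and use continuity of $f$ to reach the boundary configurations. Everything after that is the routine polarization or entry-extraction step and presents no real difficulty.
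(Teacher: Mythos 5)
Your proof is correct, and it takes a genuinely different route from the paper's. The paper works in the generalized Bloch representation: it expands $\rho_1$, $\rho_2$ and $\hat{M}$ in an orthonormal basis of traceless $SU(n)$ generators, so that for $n=2$ the hypothesis becomes $\vec{r}_M\cdot(\vec{r}_1-\vec{r}_2)=0$ for a spanning family of Bloch directions $\vec{r}_M$, forcing $\vec{r}_1=\vec{r}_2$; for $n>2$ it compresses $\rho_1$, $\rho_2$ and $\hat{M}$ onto each two-dimensional subspace $\mathrm{span}\{|i\rangle,|j\rangle\}$ with the operators $V(i,j)$ and applies the qubit case blockwise to recover every $2\times2$ block. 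You instead work directly with $\Delta=\rho_1-\rho_2$ and never leave elementary linear algebra: real-analyticity (or mere continuity) of the trigonometric polynomial $\langle\psi(\Theta,\Phi)|\Delta|\psi(\Theta,\Phi)\rangle$ propagates the hypothesis from the open parameter box to all pure states, and complex polarization then annihilates every matrix element of $\Delta$. Two comparative remarks. First, your handling of the open intervals fills a gap the paper glosses over: its $n>2$ reduction needs measurement states supported on $\mathrm{span}\{|i\rangle,|j\rangle\}$, which it obtains by ``setting some variables $\phi_k$ and $\theta_k$ to zero'' --- precisely the boundary values the hypothesis excludes; your limiting/analyticity step is what makes that move legitimate. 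Second, what the paper's route buys is not economy of proof but infrastructure: the Bloch-vector dictionary $\rho\leftrightarrow\vec{r}$ set up there is reused throughout the Letter (e.g.\ $\mathrm{Tr}(|\rho_1-\rho_2|)=|\vec{r}_1-\vec{r}_2|$ in rewriting the characteristic function), whereas your polarization argument is self-contained and, since polarization over $\mathbb{C}$ needs neither Hermiticity nor unit trace, actually proves the stronger statement that the quadratic form on this family of states determines an arbitrary operator.
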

\newtheorem{proof}{Proof}
\begin{proof}
It is known any $n\times n$ density matrix $\rho_x$ can be decomposed in an orthonormal basis ${\Gamma_n^{(k)}}$ of traceless generators of group $SU(n)$ \cite{SU1,Book1}
\begin{equation}
  \rho_x=\frac{1}{n}(1+\sum_{k=1}^{n^2-1}r_x^{(k)}\Gamma_n^{(k)}),
  \label{gep}
\end{equation}
where $r_x^{(k)}=\mathrm{Tr}(\rho_x\Gamma_{n}^{(k)})$ and generators $\Gamma_{n}^{(k)}$ have the property of $\mathrm{Tr}(\Gamma_n^{(i)}\Gamma_n^{(j)})=n\delta_{ij}$. The $n^2-1$ real parameters $r_x^{(k)}$ will uniquely determine a density matrix $\rho_{x}$. Therefore, we can use vector $\vec{r}_x=(r_x^{(1)},r_x^{(2)},\ldots,r_x^{(n^2-1)})$ to represent the density matrix. The length of vector $|\vec{r}|=\sqrt{n-1}$.

According to Eq.~(\ref{gep}), when $n=2$, $\rho_1$ and $\rho_2$ can be replaced with vectors $\vec{r}_1$ and $\vec{r}_2$ in the Bloch-Sphere and $\hat{M}$ with $\vec{r}_M(\theta,\phi)$ on the surface of the Bloch-Sphere. Consequently, $\mathrm{Tr}(\hat{M}\rho_1)=\mathrm{Tr}(\hat{M}\rho_2)$ means that $\vec{r}_M(\theta,\phi)\cdot(\vec{r}_1-\vec{r}_2)=0$. The last equation holds iff $\vec{r}_1-\vec{r}_2=0$ , namely $\rho_1=\rho_2$.

For $n> 2$, we can always reduce it to $n(n-1)/2$ $2$-dimensional subsystems and proof any corresponding subsystems equal. To detail this process, we define an operation $V(i,j)_{n,m}=\delta_{n,m}(\delta_{n,i}+\delta_{n,j})$ first. It is known that
\begin{equation}
R_s(i,j)=V(i,j)R V(i,j)^{\dag}
\end{equation}
satisfies $R_{i,j}=(R_s(i,j))_{i,j}$, where $R$ is an $n\times n$ matrix.
Moreover, $\hat{M}_s(i,j)=V(i,j)\hat{M}V(i,j)^{\dag}$ is an equivalent $2$-dimensional projection operator and can be realized by setting some variables $\phi_k$ and $\theta_k$ to zero.  $\mathrm{Tr}(\hat{M}_s(i,j)\rho_1)=\mathrm{Tr}(\hat{M}_s(i,j)\rho_2)$ for every $i$ and $j$, namely  $V(i,j)\rho_1V(i,j)^{\dag}=V(i,j)\rho_2V(i,j)^{\dag}$. Therefore $\rho_1=\rho_2$.
\end{proof}
This theorem is very important in this letter to get the characteristic function. It is shows us that any states, mixed or pure, can be distinguished by unlimited quantum measurements. It is anti-intuition because the general opinion is that $n^2-1$ parameters are needed to fix on an arbitrary $n\times n$ density matrix, but we show here that quantum measurements with $2n-1$ parameters are just enough. Following this, we will give a corollary about the local quantum measurement. It will be shown that unlimit local quantum measurement can also explore the whole information of the composite system.

\newtheorem{corollary}{Corollary}
\begin{corollary}
If $\rho_{AB}$, $\rho_{AB}'$ are density matrixes in $n_A\otimes n_B$-dimensional Hilbert space and $\hat{M}_A(\Theta,\Phi)$ is the projection operator of system $A$, we say $\rho_{AB}=\rho_{AB}'$ when $\mathrm{Tr}_A(\hat{M}_A(\Theta,\Phi)\rho_1)=\mathrm{Tr}_A(\hat{M}(\Theta,\Phi)\rho_2)$ for $\forall \theta_{i}\in (0,\pi)$ and $\forall \phi_{i}\in (0,2\pi)$.
\end{corollary}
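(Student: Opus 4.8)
The plan is to reduce the statement to the preceding lemma by exploiting the fact that a rank-one projector on $A$ turns the partial trace into a sandwich. Writing $\hat{M}_A(\Theta,\Phi)=|\psi_A\rangle\langle\psi_A|$ with $|\psi_A\rangle=|\psi(\Theta,\Phi)\rangle$, a direct computation gives $\mathrm{Tr}_A\big((\hat{M}_A\otimes I_B)\rho_{AB}\big)=\langle\psi_A|\rho_{AB}|\psi_A\rangle$, an $n_B\times n_B$ operator on $B$ that depends on the measurement parameters. Setting $\Delta=\rho_{AB}-\rho_{AB}'$, which is Hermitian, the hypothesis becomes the operator identity $\langle\psi_A|\Delta|\psi_A\rangle=0$ on $B$, required to hold for every $|\psi_A\rangle$ in the parametrized family. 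Since the parametrization sweeps out all pure states of $A$ up to an irrelevant global phase, and since $|\psi_A\rangle\mapsto\langle\psi_A|\Delta|\psi_A\rangle$ is continuous and phase-invariant, I would first upgrade this to $\langle\psi_A|\Delta|\psi_A\rangle=0$ for every unit vector $|\psi_A\rangle\in\mathbb{C}^{n_A}$.

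Next I would read off matrix elements in a fixed basis $\{|k\rangle\}$ of $B$. The vanishing of the $B$-operator is equivalent to $\langle\psi_A|\Delta_{kl}|\psi_A\rangle=0$ for all $k,l$ and all $|\psi_A\rangle$, where $\Delta_{kl}=\langle k|\Delta|l\rangle$ is an operator on $A$ alone, namely the $(k,l)$ block of $\Delta$. For the diagonal blocks $k=l$, the operator $\Delta_{kk}$ is Hermitian and its expectation value vanishes in every pure state of $A$; this is exactly the hypothesis of the preceding lemma with $\rho_1-\rho_2$ replaced by $\Delta_{kk}$, so the same $SU(n_A)$-generator/Bloch-vector argument forces $\Delta_{kk}=0$.

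The off-diagonal blocks are where some care is needed, and I expect this to be the main obstacle: for $k\neq l$ the block $\Delta_{kl}$ is a general, non-Hermitian matrix on $A$, so the real Bloch-sphere argument of the lemma does not apply verbatim. Here I would use the global Hermiticity of $\Delta$, which gives $\Delta_{lk}=\Delta_{kl}^{\dagger}$, to build the two Hermitian combinations $\Delta_{kl}+\Delta_{kl}^{\dagger}$ and $\mathrm{i}(\Delta_{kl}-\Delta_{kl}^{\dagger})$. Each of these is Hermitian and still has vanishing expectation in every pure state of $A$, so the lemma applies to each and forces both to vanish, whence $\Delta_{kl}=0$. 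Equivalently, one may invoke the complex polarization identity, valid over $\mathbb{C}$, that $\langle\psi|X|\psi\rangle=0$ for all $|\psi\rangle$ implies $X=0$.

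Since every block of $\Delta$ vanishes, $\Delta=0$, that is $\rho_{AB}=\rho_{AB}'$, which is the assertion. The only auxiliary verifications needed are the density/continuity step used to pass from the open parameter intervals to arbitrary pure states, and the elementary fact that the rank-one projectors realized by $\hat{M}_A(\Theta,\Phi)$ indeed exhaust those states; both are inherited from the setup of the preceding lemma.
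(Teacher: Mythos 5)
Your proof is correct, and it reaches the paper's conclusion by a genuinely different decomposition, though the overall strategy is the same: turn the operator-valued hypothesis on $B$ into a family of scalar constraints on $A$ and dispose of each by Lemma~1. The paper expands both $\hat{M}_A$ and $\rho_{AB}$ in the Hermitian generator basis $\Gamma_A^{(j)}\otimes\Gamma_B^{(k)}$, so that
$\mathrm{Tr}_A(\hat{M}_A\rho_{AB})-\mathrm{Tr}_A(\hat{M}_A\rho_{AB}')=\sum_{k}\bigl(\vec{r}_M'\cdot \Delta\vec{r}_{AB}^{(k)}\bigr)\Gamma_B^{(k)}$
with every coefficient a \emph{real} linear functional of the measurement Bloch vector; Lemma~1 then applies verbatim, component by component, with no case distinctions. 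You instead use the sandwich identity $\mathrm{Tr}_A\bigl((\hat{M}_A\otimes I_B)\rho_{AB}\bigr)=\langle\psi_A|\rho_{AB}|\psi_A\rangle$ and slice $\Delta=\rho_{AB}-\rho_{AB}'$ into matrix-unit blocks $\Delta_{kl}=\langle k|\Delta|l\rangle$; the price of this basis choice is that the off-diagonal blocks are non-Hermitian, which you correctly pay via the Hermitian combinations $\Delta_{kl}+\Delta_{kl}^{\dagger}$ and $\mathrm{i}(\Delta_{kl}-\Delta_{kl}^{\dagger})$ (equivalently, complex polarization). Two remarks. First, your route is more careful than the paper's on a point it silently skips: the parameters range over open intervals, so the measurement family only densely covers the pure states of $A$, and your continuity argument is what legitimately extends the vanishing to all $|\psi_A\rangle$. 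Second, a small imprecision: Lemma~1 as stated concerns density matrices, whereas $\Delta_{kk}$ is a difference of subnormalized positive operators whose traces are not a priori equal; this is harmless because the lemma's Bloch-vector argument (or simply the spectral theorem) applies to any Hermitian operator all of whose pure-state expectations vanish, and the vanishing expectations themselves force $\mathrm{Tr}(\Delta_{kk})=0$. In exchange, your approach is more elementary and self-contained (no tensor-generator bookkeeping), and the polarization observation shows the conclusion uses nothing about $\Delta$ beyond linearity, so the corollary in fact holds for arbitrary operators, not only differences of states.
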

\newtheorem{Proof2}{Proof}
\begin{proof}
For a composite system $\mathcal{H}_A\otimes \mathcal {H}_B$, the orthonormal basis $\{\Gamma_{AB}^{k}\}=\{\Gamma_{A}^{i}\otimes \Gamma_{B}^{j}\}$; $n_B\cdot i+j=1,2,\ldots,n_A n_B$. For convenience, $\Gamma^{(0)}=1$ here. Therefore, it is shown that
\begin{eqnarray}
&&\hat{M}_A \otimes I_B=\frac{I_A\otimes I_B}{n_A}+\sum_{i=1}^{n_A}r_M^{(i)}\Gamma_A^{(i)}\otimes I_B;\nonumber \\
&&\rho_{AB}=\frac{I_A\otimes I_B}{n_An_B}+\sum_{j+k=1}^{n_A,n_B}r_{AB}^{(j,k)}\Gamma_A^{(j)}\otimes \Gamma_B^{(k)}.
\end{eqnarray}
It is noted that only these terms $\Gamma_A^{(0)}\Gamma_B^{(i)}$ remain when a partial trace is done under the system $A$, and then we take the form
\begin{eqnarray}
  &&\mathrm{Tr}_A(\hat{M}_A\rho_{AB})-\mathrm{Tr}_A(\hat{M}_A\rho_{AB}')\nonumber\\
  &&=\sum_{k}\vec{r}_M'\cdot \Delta\vec{r}_{AB}^{(k)}\Gamma_B^{(k)},
\end{eqnarray}
where, $\vec{r}_M'=(1,\vec{r}_M)$. Therefore, it is gotten that $\vec{r}_M'\cdot\Delta \vec{r}_{AB}^{(k)}=0$ if $\mathrm{Tr}_A(\hat{M}_A\rho_{AB})-\mathrm{Tr}_A(\hat{M}_A\rho_{AB}')=0$. According to the Lemma 1, $\Delta r_{AB}^{(jk)}=0$, namely $\rho_{AB}=\rho_{AB}'$.
\end{proof}

Let us return to the study of QNC. A local quantum measurement is under a subsystem $\rho_A$ of a composite system $\rho_{AB}$ spanning in the $n_A\times n_B$ Hilbert space. After a local quantum measurement, subsystem $\rho_A$ will collapse into $|\psi(\Theta,\Phi)\rangle_A$, and $\rho_B$ will correspondingly change into
\begin{equation}
\rho_B^M=\frac{\mathrm{Tr}_A(\hat{M}_A\rho_{AB})}{\mathrm{Tr}(\hat{M}_A\rho_{AB})}.
\end{equation}
$\rho_B^M$ is a functional of $\hat{M_A}$. Considering the projection operator has a minimal variety $\hat{M}_A\rightarrow \hat{M}+\delta \hat{M}_A$ (which just like the stimulation input), the sub-state $\rho_B^M$ will correspondingly change into $\rho_B^{M}\rightarrow \rho_B^{M}+\delta\rho_B^M$ (which is the response output). Therefore, we define the following equation about the ratio of change $\delta\rho_B^M/\delta\hat{M}$
\begin{equation}
 \vec{\mathrm{F}}(\Theta,\Psi;\rho_{AB})_{A\rightarrow B}=\sum_{i=1}^{n-1}\frac{\mathrm{Tr}(|\delta^{\theta_i}\rho_B^M|)}{\mathrm{Tr}(|\delta^{\theta_i }\hat{M}_A|)}\vec{e}_{\theta_i}+\frac{\mathrm{Tr}(|\delta^{\phi_i}\rho_B^M|)}{\mathrm{Tr}(|\delta^{\phi_i} \hat{M}_A|)}\vec{e}_{\phi_i}.
  \label{cftion}
\end{equation}
According to Eq.~(3), we get that $\mathrm{Tr}(|\rho_1-\rho_2|)=|\vec{r}_1-\vec{r}_2|$, therefore the equation above can be rewritten as
\begin{equation}
\vec{\mathrm{F}}(\Theta,\Psi;\rho_{AB})_{A\rightarrow B}=\sum_{i=1}^{n-1}\frac{|\delta^{\theta_i}\vec{r}_B^{M}|}{|\delta^{\theta_i}\vec{r}_M|}\vec{e}_{\theta_i}
 +\sum_{i=1}^{n-1}\frac{|\delta^{\phi_i}\vec{r}_B^{M}|}{|\delta^{\phi_i}\vec{r}_M|}\vec{e}_{\theta_i},
\end{equation}
where $\delta^{x}f=(\partial f/\partial x) \delta x$. $\vec{\mathrm{F}}_{A\rightarrow B}$ is a vector in $2n-1$-dimensional space. It forms a surface in this spaces when $\Phi$, and $\Theta$ each change from 0 to $2\phi$. We call $\vec{\mathrm{F}}_{A\rightarrow B}$ the characteristic function of QNC in $\rho_{AB}$ because it is defined totally by the character of QNC. Every special QNC corresponds to a unique characteristic function. This character of $\vec{\mathrm{F}}_{A\rightarrow B}$ can be clearly seen in the following theorem.

\begin{theorem}
Any two density matrixes, $\rho_{AB}^{(i)}$ and $\rho_{AB}^{(j)}$ with the same characteristic function $\vec{\mathrm{F}}_{A\rightarrow B}$ can be transformed into each other by a local unitary transformation under system $B$, namely, $\rho_{AB}^{(i)}=(I_A\otimes U_B^{(ij)})\rho_{AB}^{(j)}(I_A\otimes U_B^{(ij)})^{\dag}$.
\end{theorem}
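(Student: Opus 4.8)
The plan is to treat this as an equivalence in disguise: first verify the reverse implication (local unitaries preserve $\vec{\mathrm{F}}_{A\rightarrow B}$) as a consistency check that isolates the mechanism, and then prove the stated direction (equal characteristic functions force local unitary equivalence). For the reverse implication, suppose $\rho_{AB}^{(i)}=(I_A\otimes U_B)\rho_{AB}^{(j)}(I_A\otimes U_B)^{\dagger}$. Then for every measurement, $\mathrm{Tr}_A(\hat{M}_A\rho_{AB}^{(i)})=U_B\,\mathrm{Tr}_A(\hat{M}_A\rho_{AB}^{(j)})\,U_B^{\dagger}$, and since the normalising trace is unchanged, $\rho_B^{M,(i)}=U_B\rho_B^{M,(j)}U_B^{\dagger}$. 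In the Bloch picture this is a fixed orthogonal rotation $O_B$ (the adjoint action of $U_B$) acting on $\vec{r}_B^M$, and because orthogonal maps preserve lengths, every $|\delta\vec{r}_B^M|$ and hence every component of $\vec{\mathrm{F}}_{A\rightarrow B}$ is preserved. This confirms $\vec{\mathrm{F}}_{A\rightarrow B}$ is an LU-invariant and shows why only the \emph{lengths} of the response variations can survive.

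For the substantive direction I would first expand $\rho_{AB}$ in the generator basis of the Corollary, with $A$-marginal vector $\vec{s}$, $B$-marginal vector $\vec{t}$, and correlation matrix $C$. Using $\mathrm{Tr}(\Gamma^{(i)}\Gamma^{(j)})=n\delta_{ij}$, a short computation gives the post-measurement $B$-Bloch vector as a projective-linear image of the measurement vector $\vec{r}_M$ on the $A$-Bloch sphere, $\vec{r}_B^M=(\vec{t}+C^{T}\vec{r}_M)/(1+\vec{s}\cdot\vec{r}_M)$ up to fixed constants. Differentiating, the Jacobian is $J(\vec{r}_M)=(C^{T}-\vec{r}_B^M\,\vec{s}^{\,T})/(1+\vec{s}\cdot\vec{r}_M)$, so $J$ carries all of $\vec{s}$, $\vec{t}$, $C$. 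The characteristic function records $|\delta\vec{r}_B^M|/|\delta\vec{r}_M|=|J\hat{u}|$ for the coordinate tangent directions $\hat{u}$; evaluating $\vec{\mathrm{F}}_{A\rightarrow B}$ in every measurement frame (equivalently, pre-composing the measurement with a unitary on $A$, which at each point sweeps $\hat{u}$ through all tangent directions of the sphere) supplies $|J\hat{u}|$ for all $\hat{u}$. This is exactly the pulled-back metric $G(\vec{r}_M)=J^{T}J$ at every point, so equality of the two characteristic functions means $J_i^{T}J_i=J_j^{T}J_j$ on the whole sphere.

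The geometric content of this identity is that $\vec{r}_M\mapsto\vec{t}+C^{T}\vec{r}_M$ sends the sphere to an ellipsoid, an ellipsoid is fixed up to an orthogonal rotation by its metric, and the shared projective denominator simultaneously fixes $\vec{s}$. Hence, via the polar decomposition $J=Q\sqrt{J^{T}J}$, the two maps differ by a left orthogonal factor $O_B$; because $C$ and $\vec{t}$ are point-independent, this factor must be one constant $O_B$, giving $C_i^{T}=O_BC_j^{T}$, $\vec{t}_i=O_B\vec{t}_j$, and $\vec{s}_i=\vec{s}_j$. I would then invoke that an orthogonal map of the $B$-Bloch space carrying the density-matrix set to itself is the adjoint of a (anti)unitary, so $O_B$ is realised by a unitary $U_B$; these three relations are precisely $\rho_{AB}^{(i)}=(I_A\otimes U_B)\rho_{AB}^{(j)}(I_A\otimes U_B)^{\dagger}$, with the Corollary guaranteeing that the reconstructed Bloch data determine the state.

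The hard part, which I expect to be the main obstacle, is the step promoting the point-wise metric equality $J_i^{T}J_i=J_j^{T}J_j$ to a \emph{single constant} orthogonal $O_B$: in general a surface is not rigid under its first fundamental form alone, so a naive argument would only yield a point-dependent rotation $Q(\vec{r}_M)$. I plan to close this by exploiting the special structure here, namely that the image is a projective-linear (ellipsoidal) deformation of a round sphere with constant coefficient matrices, which rules out genuine bending and forces $Q$ to be constant. A secondary subtlety I would need to address is excluding the anti-unitary (improper orthogonal) alternative when passing from $O_B$ to $U_B$.
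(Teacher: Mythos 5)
Your reverse implication and the Bloch-space formula $\vec{r}_B^M=(\vec{t}+C^{T}\vec{r}_M)/(1+\vec{s}\cdot\vec{r}_M)$ are correct, but the issue you defer as a ``secondary subtlety'' --- excluding the anti-unitary alternative --- is not a subtlety: it cannot be done, because the theorem as stated is false without it, so no completion of your plan (nor of the paper's own two-line argument) exists. The characteristic function records only norms $|\delta\vec{r}_B^M|$, and norms cannot distinguish the family of steered states from its mirror image: if the steered vectors of two states differ by a fixed \emph{improper} orthogonal map $R$ of the $B$ Bloch space, the characteristic functions coincide identically, yet $R$ is implemented by an anti-unitary (transposition on $B$ composed with a unitary), never by a unitary. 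Concretely, take the two-qubit Bell-diagonal states with zero marginals and correlation matrices $C=\mathrm{diag}(1/2,\,3/10,\,1/10)$ and $C'=\mathrm{diag}(1/2,\,-3/10,\,1/10)$; both are bona fide (indeed separable) states, the second being the partial transpose on $B$ of the first. Their steered Bloch vectors are $C^{T}\vec{r}_M$ and $RC^{T}\vec{r}_M$ with $R=\mathrm{diag}(1,-1,1)$, so every $|\delta\vec{r}_B^M|$ --- in fact the entire pulled-back metric $J^{T}J$ --- agrees for every measurement; but $(I_A\otimes U_B)$-equivalence would force $CO_B^{T}=C'$, and invertibility of $C$ gives $O_B^{T}=\mathrm{diag}(1,-1,1)$, which is improper, so no such $U_B$ exists. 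Your polar-decomposition endgame therefore cannot close, and the paper's proof (which asserts that equality of the trace-norm variations is ``equivalent to'' $(\rho_B^M)^{(i)}=U_B(\rho_B^M)^{(j)}U_B^{\dag}$ and then cites Corollary 1) founders on exactly the same point, in addition to skipping the constancy-of-$U_B$ issue that you correctly identify as nontrivial.

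There is also a prior gap in your metric-reconstruction step. The hypothesis is equality of $\vec{\mathrm{F}}_{A\rightarrow B}$ in the fixed $(\Theta,\Phi)$ parametrization, i.e.\ equality of the stretches $|J\partial_{\theta_i}|$ and $|J\partial_{\phi_i}|$ along coordinate directions only. You are not entitled to ``evaluate $\vec{\mathrm{F}}$ in every measurement frame'': pre-composing the measurement with a unitary on $A$ defines a \emph{different} characteristic function, and the hypothesis says nothing about the two states agreeing on it. Pointwise you thus hold only the diagonal entries of the first fundamental form in one coordinate system, not the cross term, so $J_i^{T}J_i=J_j^{T}J_j$ does not follow from what is given. (This gap turns out to be moot only in the negative sense: the counterexample above shows that even full metric equality would not yield the stated conclusion.) The part of your plan that is genuinely sound --- using the constancy of $\vec{s}$, $\vec{t}$, $C$ in the projective-linear map to force the pointwise orthogonal factor to be a single constant --- is a real improvement over the paper's sketch; but the statement it serves can at best be repaired to conclude equivalence under a unitary \emph{or anti-unitary} on $B$, i.e.\ up to a possible partial transposition, not the unitary equivalence claimed.
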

The proof of the corollary is not complication when one notes that $\mathrm{Tr}(|(\delta\rho_{B}^{M})^{(i)}|)=\mathrm{Tr}(|(\delta\rho_{B}^{M})^{(j)}|)$, which is equivalent to $(\rho_{B}^{M})^{(i)}=U_B^{(ij)}(\rho_{B}^{M})^{(j)}(U_B^{(ij)})^{\dag}$, if $\rho_{AB}^{(i)}$ and $\rho_{AB}^{(j)}$ have a same characteristic function. Basing on the conclusion of corollary 1, we get the theorem above.

The definition of $\vec{\mathrm{F}}_{B\rightarrow A}$ is analogous with the $\vec{\mathrm{F}}_{A\rightarrow B}$ and will not be repeated again. Both $\vec{\mathrm{F}}_{B\rightarrow A}$ and $\vec{\mathrm{F}}_{A\rightarrow B}$  can act as the characteristic function in a composite state. According to this theorem, we can also conclude that $|\vec{\mathrm{F}}(\Theta,\Psi;\rho_{AB}')_{A\rightarrow B}|=|\vec{\mathrm{F}}(\Theta',\Psi';\rho_{AB})_{A\rightarrow B}|$ if $\rho_{AB}'=U_A\otimes U_B \rho_{AB}U_A^{\dag}\otimes U_B^{\dag}$, where $U_A \hat{M}(\Theta,\Psi)U_A^{\dag}=\hat{M}(\Theta',\Psi')$. This is because $\rho_{B}^{M'}=\mathrm{Tr}_A(M_A'\rho_{AB})$, where $M_A'=U_A^{\dag}M_A U_A$. Moreover, $\mathrm{Tr}(|\delta M_A'|)=\mathrm{Tr}(|\delta M_A|)$ because $\delta M_A'=U(\delta Q-\delta S)U^{\dag}$ ( $\delta Q$ and $\delta S$ are infinitesimal positive operators with orthogonal support). Hence $\mathrm{Tr}(|\delta M_A'|)=2\mathrm{Tr}(\delta Q)=\mathrm{Tr}(|\delta M_A|)$.

Let us show examples, for a pure qubit system, $|\psi\rangle_{AB}=\cos\alpha|0,0\rangle+\sin\alpha \exp(\mathrm{i}\gamma)|1,1\rangle$, the characteristic function can be expressed as
\begin{equation}
\vec{\mathrm{F}}(\theta,\phi;|\psi\rangle_{AB})_{A\rightarrow B}=\frac{2|\sin2\alpha|(\vec{e}_\theta+\vec{e}_\phi)}
 {2+\cos2(\theta-\alpha)+\cos2(\theta+\alpha)}.
\end{equation}
To show that local transformation cannot change the shape of $|\vec{\mathrm{F}}|$, we let $|\psi\rangle_{AB}'=U_A|\psi\rangle_{AB}$, where $U_A|0\rangle=\sqrt{3}/2|0\rangle+1/2|1\rangle$, and $U_A|1\rangle=-\sqrt{3}/2|1\rangle+1/2|0\rangle$. We show the pictures of the characteristic function of the state $|\psi\rangle_{AB}$ and $|\psi\rangle_{AB}'$ in the Bloch sphere. As can be seen in Fig.~1, these characteristic functions can be transformed into each other through a rotation.

\begin{figure}
  % Requires \usepackage{graphicx}
  \includegraphics[width=0.5\textwidth]{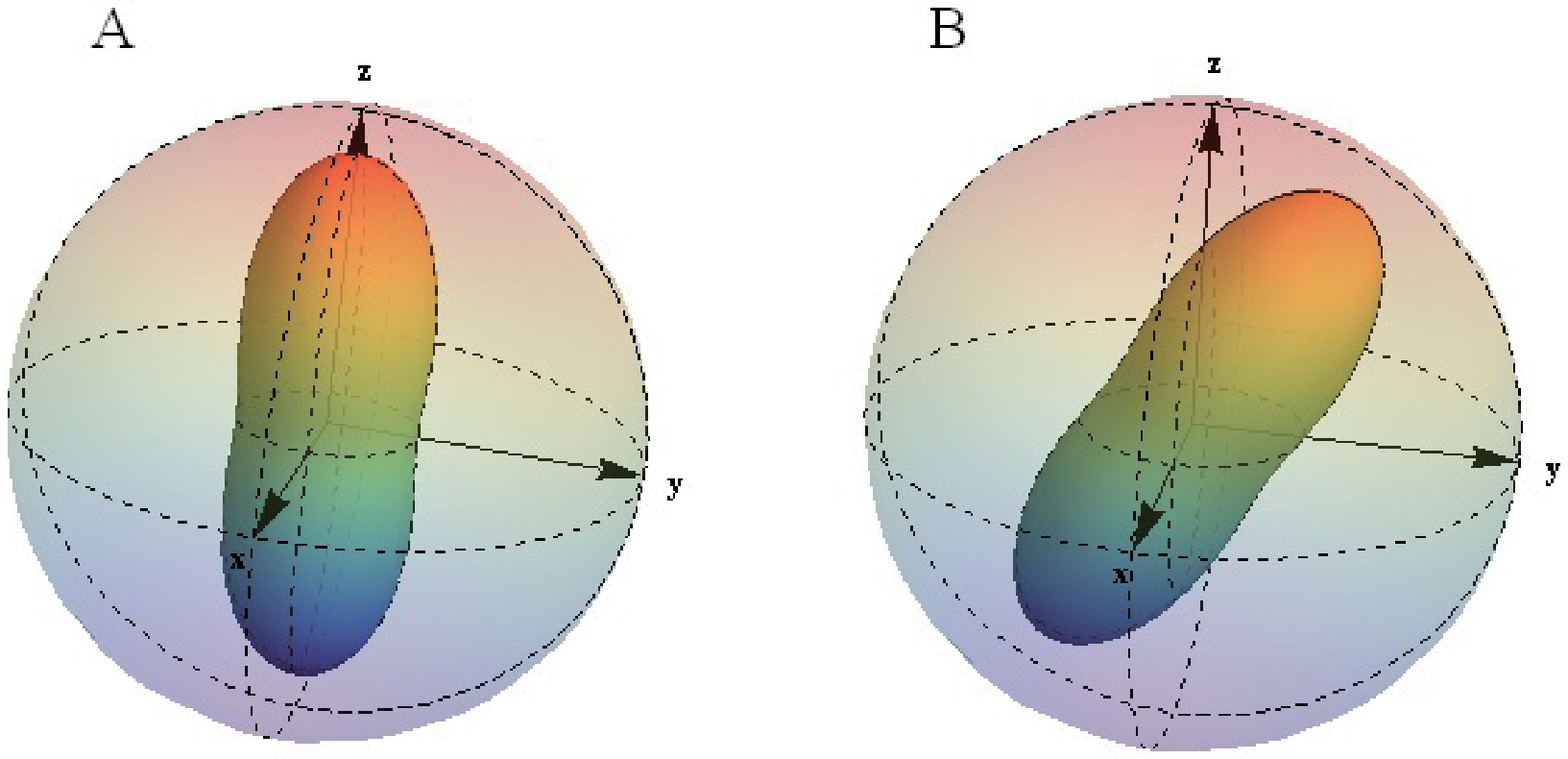}\\
  \caption{The absolute value of the characteristic function in Bloch Sphere. The picture $A$ shows the shape of $|\vec{\mathrm{F}}(\theta,\phi;|\psi\rangle_{AB})_{A\rightarrow B}|$ and the picture $B$ explores the shape of $|\vec{\mathrm{F}}(\theta,\phi;|\psi\rangle_{AB}')_{A\rightarrow B}|$. We choose $\alpha=\pi/3$ here.}\label{Fig1}
\end{figure}

According to this character, we definite a physical quantity that is independent with the form of unlimited quantum measurement as:
\begin{equation}
  G(\rho_{AB})_{A\rightarrow B}=\frac{\sqrt{n_A}}{\Omega}\int_{\Omega} |\vec{\mathrm{F}}_{A\rightarrow B}|\mathrm{Tr}(\rho_A \hat{M}_A)\mathrm{d}_{\mathbb{R}^{2n-2}}\Omega,
  \label{seq}
\end{equation}
where $\mathrm{d}_{\mathbb{R}^{2n-2}}\Omega$ is
\begin{eqnarray}
\mathrm{d}_{\mathbb{R}^{2n-2}}\Omega=\prod_{l=1}^{n-1}\sin(\theta_l)^{n-l-1}\mathrm{d}\theta_l\mathrm{d}\phi_l.
\end{eqnarray}

For a two-particle pure state, it can be proven that $G(\rho_{AB})_{A\rightarrow B}= G(\rho_{AB})_{B\rightarrow A}$, but for an arbitrary state, these two terms are not necessarily equal each other. Hence we define the strength of QNC as the average value of these two terms:
\begin{equation}
G(\rho_{AB})=\frac{1}{2}(G(\rho_{AB})_{A\rightarrow B}+G(\rho_{AB})_{B\rightarrow A}).
\end{equation}

\begin{theorem}
These density matrixes which can be transformed each other by local operation have the same nonlocal strength. Namely, $G(\rho_{AB})=G(U_A\otimes U_B\rho_{AB}U_A^{\dag}\otimes U_B^{\dag})$.
\end{theorem}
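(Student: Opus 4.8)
The plan is to prove the invariance of each directional strength $G(\rho_{AB})_{A\to B}$ and $G(\rho_{AB})_{B\to A}$ separately; since $G$ is their arithmetic mean, the theorem follows at once. Because an arbitrary local unitary factorizes as $U_A\otimes U_B=(U_A\otimes I_B)(I_A\otimes U_B)$, it is enough to prove invariance of $G(\rho_{AB})_{A\to B}$ under each factor and then compose, and the case $G(\rho_{AB})_{B\to A}$ is identical after exchanging the labels $A\leftrightarrow B$. For the $A\to B$ strength I would separate the unitary acting on the \emph{response} side ($U_B$) from the one acting on the \emph{measurement} side ($U_A$), because the two are controlled by entirely different mechanisms.

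First I would dispose of the response-side factor $I_A\otimes U_B$. Here $\rho_A=\mathrm{Tr}_B(\rho_{AB})$ is untouched, so $\mathrm{Tr}(\rho_A\hat M_A)$ and the prefactor $\sqrt{n_A}/\Omega$ are unchanged, while the local projector $\hat M_A$ and its variations are unaffected, so every denominator $\mathrm{Tr}(|\delta^{\theta_i}\hat M_A|)$ in (\ref{cftion}) is fixed. The post-measurement state transforms rigidly, $\rho_B^M\mapsto U_B\rho_B^M U_B^\dag$, hence $\delta^{\theta_i}\rho_B^M\mapsto U_B(\delta^{\theta_i}\rho_B^M)U_B^\dag$, and, the trace norm being unitarily invariant, every numerator $\mathrm{Tr}(|\delta^{\theta_i}\rho_B^M|)$ is also unchanged. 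Thus $\vec{\mathrm F}_{A\to B}$ is invariant \emph{pointwise} in $(\Theta,\Psi)$, the whole integrand in (\ref{seq}) is unchanged, and so is $G(\rho_{AB})_{A\to B}$.

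The substantive step is the measurement-side factor $U_A\otimes I_B$. Writing $\rho'_{AB}=(U_A\otimes I_B)\rho_{AB}(U_A^\dag\otimes I_B)$, I would change the integration variable by $\hat M_A=U_A\hat N U_A^\dag$. Three facts then have to line up under this substitution: (i) since $\rho_A'=U_A\rho_A U_A^\dag$, cyclicity of the trace gives $\mathrm{Tr}(\rho_A'\hat M_A)=\mathrm{Tr}(\rho_A\hat N)$; (ii) the characteristic-function magnitude obeys $|\vec{\mathrm F}(\hat M_A;\rho'_{AB})_{A\to B}|=|\vec{\mathrm F}(\hat N;\rho_{AB})_{A\to B}|$, which is precisely the local-unitary identity deduced from Theorem 1 in the text preceding this statement; and (iii) the volume element $\mathrm d_{\mathbb R^{2n-2}}\Omega$ is the unitarily invariant (Fubini--Study) measure on the projective space of rank-one projectors, so $\mathrm d\Omega(\hat M_A)=\mathrm d\Omega(\hat N)$. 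Feeding (i)--(iii) into the integral defining $G(\rho'_{AB})_{A\to B}$ turns it into the integral defining $G(\rho_{AB})_{A\to B}$, proving invariance under $U_A\otimes I_B$; composing with the response-side result gives invariance of $G_{A\to B}$ under the full $U_A\otimes U_B$.

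I expect the main obstacle to be step (iii) together with the bookkeeping that matches it to (ii): one must verify that the coordinates $(\Theta,\Psi)$ really carry the unitarily invariant measure, and that the \emph{direction} of the reparametrization ($U_A$ versus $U_A^\dag$) is chosen consistently in the measure, in the overlap $\mathrm{Tr}(\rho_A\hat N)$, and in the characteristic-function identity, so that all three refer to the same change of variables. The response-side invariance is essentially trivial (trace-norm rigidity), and once the measurement-side change of variables is justified, the reduction to $G_{B\to A}$ by $A\leftrightarrow B$ symmetry and the final averaging are immediate, completing the proof.
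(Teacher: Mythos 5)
Your proposal is correct and takes essentially the same route as the paper's own (very terse) proof: pointwise invariance of $|\vec{\mathrm{F}}_{A\rightarrow B}|$ under local unitaries --- the identity $|\vec{\mathrm{F}}(\Theta,\Psi;\rho_{AB}')_{A\rightarrow B}|=|\vec{\mathrm{F}}(\Theta',\Psi';\rho_{AB})_{A\rightarrow B}|$ stated right after Theorem~1 --- combined with invariance of the integration measure under the reparametrization $\hat{M}_A=U_A\hat{N}U_A^{\dag}$, which is what the paper means by ``the integrating range of Eq.~(\ref{seq}) is $SU(2n-2)$ by symmetry.'' Your version is simply more explicit, factorizing $U_A\otimes U_B$ into measurement-side and response-side parts and tracking the weight $\mathrm{Tr}(\rho_A\hat{M}_A)$, which the paper's one-line argument leaves implicit.
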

Theorem~2 is obvious because according to the analysis above $|\vec{\mathrm{F}}(\Theta,\Psi)'|=|\vec{\mathrm{F}}(\Theta',\Psi')|$ and the integrating range of Eq.~(\ref{seq}) is $SU(2n-2)$ by symmetry.

In terms of the definition of Eq.~(13), some separable states, such as $\rho_{AB}=1/2(|00\rangle\langle00|+|11\rangle\langle11|)$ have a nonlocal correlation although without entanglement (In fact, $G(\rho_{AB})=1/2$ here). It can be also seen that $G(\rho_{AB})$ is not monotonic under LOCC, but it is monotonic under local trace-preserving quantum operation.
\begin{lemma}
Suppose $\varepsilon_p$ is a partial local trace-preserving quantum operation and let $\rho$ be a density operator. Then
\begin{equation}
  G(\rho_{AB})\geq G(\varepsilon_P^x(\rho_{AB})).
\end{equation}
$\varepsilon_p^{A}(\rho_{AB})=\sum_i\lambda_i U_A\otimes I_B \rho_{AB}U_A^{\dag}\otimes I_B$ and  $\varepsilon_p^{B}(\rho_{AB})=\sum_i\lambda_i I_A\otimes U_B \rho_{AB}I_A\otimes U_B^{\dag}$.
\end{lemma}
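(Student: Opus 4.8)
The plan is to reduce the claim to the local-unitary invariance of Theorem~2 together with a convexity principle for $G$. Trace preservation forces $\sum_i\lambda_i=1$ with $\lambda_i\ge 0$, so $\varepsilon_p^x(\rho_{AB})=\sum_i\lambda_i\sigma_i^{(x)}$ is an honest convex combination of the locally rotated copies $\sigma_i^{(A)}=(U_i\otimes I_B)\rho_{AB}(U_i^{\dag}\otimes I_B)$ and $\sigma_i^{(B)}=(I_A\otimes U_i)\rho_{AB}(I_A\otimes U_i^{\dag})$. By Theorem~2 every summand satisfies $G(\sigma_i^{(x)})=G(\rho_{AB})$. Hence it would suffice to prove that $G$ is convex, $G(\sum_i\lambda_i\sigma_i)\le\sum_i\lambda_i G(\sigma_i)$, since then $G(\varepsilon_p^x(\rho_{AB}))\le\sum_i\lambda_i G(\sigma_i^{(x)})=G(\rho_{AB})$. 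Because $G=\frac{1}{2}(G_{A\rightarrow B}+G_{B\rightarrow A})$, I would argue each directional strength separately, distinguishing whether the mixing channel acts on the steered subsystem or on the measured one.

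First I would settle the steered-side contributions, which are the clean half: $G_{A\rightarrow B}$ under $\varepsilon_p^B$, and symmetrically $G_{B\rightarrow A}$ under $\varepsilon_p^A$. Here the marginal of the measured party is untouched, so the weight $\mathrm{Tr}(\rho_A\hat{M}_A)$ in Eq.~(\ref{seq}) is unchanged, while the conditional state transforms linearly and independently of $\hat{M}_A$ as $\rho_B^M\mapsto\Lambda(\rho_B^M)$ with $\Lambda(\cdot)=\sum_i\lambda_i U_i(\cdot)U_i^{\dag}$ a fixed mixed-unitary channel. The measurement variation therefore commutes with $\Lambda$, giving $\delta\rho_B^M\mapsto\Lambda(\delta\rho_B^M)$, and since $\Lambda$ is a mixture of unitaries the triangle inequality yields $\mathrm{Tr}(|\Lambda(\delta\rho_B^M)|)\le\sum_i\lambda_i\mathrm{Tr}(|\delta\rho_B^M|)=\mathrm{Tr}(|\delta\rho_B^M|)$, so every component of $\vec{\mathrm{F}}_{A\rightarrow B}$ shrinks pointwise. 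As the denominators $\mathrm{Tr}(|\delta\hat{M}_A|)$ and the weight are unaffected, the integrand of Eq.~(\ref{seq}) does not grow anywhere on $\Omega$, and the inequality survives integration.

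The genuinely harder half, which I expect to be the main obstacle, is the measured-side contribution, e.g. $G_{A\rightarrow B}$ under $\varepsilon_p^A$. Here the unnormalized conditional operator and outcome probability become $N=\sum_i\lambda_i N_i$ and $p=\sum_i\lambda_i p_i$, where $N_i=\mathrm{Tr}_A((U_i^{\dag}\hat{M}_A U_i)\rho_{AB})$ and $p_i=\mathrm{Tr}_B N_i$ are the data of the \emph{same} state $\rho_{AB}$ probed by the rotated rank-one measurement $U_i^{\dag}\hat{M}_A U_i$. The conditional state is therefore a measurement-dependent mixture $\rho_B^M=\sum_i q_i\rho_B^{M,(i)}$ with $q_i=\lambda_i p_i/p$, and varying the measurement generates cross terms $(\delta q_i)\rho_B^{M,(i)}$ that a naive triangle-inequality bound leaves with the wrong sign; equivalently, the normalized map $\rho_{AB}\mapsto N/p$ is not jointly convex, so the integrand need not be convex pointwise. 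My plan is to reduce matters, after cancelling the $\rho_{AB}$-independent denominators $\mathrm{Tr}(|\delta\hat{M}_A|)$ and using Minkowski's inequality to strip the Euclidean norm over measurement directions, to the single pointwise estimate $\mathrm{Tr}(|\delta N-(N/p)\delta p|)\le\sum_i\lambda_i\mathrm{Tr}(|\delta N_i-(N_i/p_i)\delta p_i|)$. This is a perspective-type convexity of the trace norm; although it fails on all of operator space, I would prove it on the linear image of density matrices, after which the change of variables $\hat{M}_A\mapsto U_i^{\dag}\hat{M}_A U_i$ on the rotationally symmetric domain $\Omega$ converts the integrated right-hand side into $G_{A\rightarrow B}(\rho_{AB})$, closing the bound. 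Verifying this restricted convexity is the crux of the argument.
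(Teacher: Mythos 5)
Your proposal is incomplete at exactly the point you yourself flag as the crux, so as it stands it is not a proof. Note first that your ``clean half'' is the paper's \emph{entire} argument: the paper proves the lemma by invoking contractivity of the trace distance, $D(\rho,\sigma)\geq D(\varepsilon(\rho),\varepsilon(\sigma))$, under trace-preserving operations, and your observation that $\delta\rho_B^M\mapsto\Lambda(\delta\rho_B^M)$ with $\Lambda(\cdot)=\sum_i\lambda_i U_i(\cdot)U_i^{\dag}$ a fixed channel, together with $\mathrm{Tr}(|\Lambda(X)|)\leq\sum_i\lambda_i\mathrm{Tr}(|U_iXU_i^{\dag}|)=\mathrm{Tr}(|X|)$, is precisely that contractivity made explicit for a mixed-unitary channel. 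So for the steered-side terms ($G_{A\rightarrow B}$ under $\varepsilon_p^{B}$ and $G_{B\rightarrow A}$ under $\varepsilon_p^{A}$) you and the paper agree.

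The gap is the measured-side case, e.g.\ $G_{A\rightarrow B}$ under $\varepsilon_p^{A}$. There the new conditional state is $\sum_i q_i\,\rho_B^{M,(i)}$ with measurement-dependent weights $q_i=\lambda_i p_i/p$, and no fixed channel on $B$ carries the original conditional family to the new one, so neither the paper's one-line contractivity argument nor your triangle inequality applies; you correctly reduce this case to the perspective-type estimate $\mathrm{Tr}(|\delta N-(N/p)\delta p|)\leq\sum_i\lambda_i\mathrm{Tr}(|\delta N_i-(N_i/p_i)\delta p_i|)$, concede that it is false on all of operator space, and then defer its proof on the restricted set of data arising from a single $\rho_{AB}$ probed by rotated projectors. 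That deferred step is not a technicality: the unrestricted inequality fails even for commuting (diagonal) data --- take $\lambda_1=\lambda_2=1/2$, $N_1=(2,1)$, $\delta N_1=(2,1)$, $N_2=(1,2)$, $\delta N_2=(-1,-2)$, for which every term on the right vanishes while the left side equals $1$ --- so any correct proof must genuinely exploit the geometric constraints linking the $N_i$ and $\delta N_i$ (and plausibly the averaging over $\Omega$), and you supply no such argument. To be fair, the paper's own one-sentence proof silently skips this same case, so you have located a real lacuna in the paper; but your proposal identifies the obstacle without closing it, and until the restricted convexity is actually established, the lemma remains unproven for the channel acting on the measured subsystem.
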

To prove this lemma, we should use the previous conclusion $D(\rho,\sigma)\geq D(\varepsilon(\rho),\varepsilon(\sigma))$, where $D(\rho,\sigma)$ is the trace distance\cite{Book2}. Basing on this lemma, we will get a more important conclusion as follows.
\begin{corollary}
Suppose $\rho_{AB}=\sum_i{\gamma_i}\rho_{AB}^{(i)}$ is a pure state decomposition of $\rho_{AB}$ and $\rho_{AB}^{\vec{\gamma}\otimes}=\sum_i{\gamma_i}\rho_{A}^{(i)}\otimes\rho_B^{(i)}$, where $\rho_A^{(i)}=\mathrm{Tr}_B(\rho_{AB}^{(i)})$ and $\rho_B^{(i)}=\mathrm{Tr}_A(\rho_{AB}^{(i)})$. Then,
\begin{equation}
G(\rho_{AB})\geq G(\rho_{AB}^{\vec{\gamma}\otimes}).
\label{sgeq}
\end{equation}
\end{corollary}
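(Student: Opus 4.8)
The plan is to exhibit $\rho_{AB}^{\vec{\gamma}\otimes}$ as the image of $\rho_{AB}$ under local trace-preserving operations and then quote the monotonicity of the preceding lemma. The clean, decomposition-free ingredient is the per-component statement. Fix one pure component $\rho_{AB}^{(i)}=|\psi^{(i)}\rangle\langle\psi^{(i)}|$ and introduce the local ``trace-and-replace'' channel on $B$ defined by $\mathcal{R}_B^{(i)}(\sigma)=\mathrm{Tr}(\sigma)\,\rho_B^{(i)}$. A one-line computation gives $(I_A\otimes\mathcal{R}_B^{(i)})(\rho_{AB}^{(i)})=\rho_A^{(i)}\otimes\rho_B^{(i)}$, since tracing out $B$ of the pure component returns the correct marginal $\rho_A^{(i)}$ while the $B$-factor is reset to $\rho_B^{(i)}$. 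Because $\mathcal{R}_B^{(i)}$ is a bona fide local trace-preserving operation---the trace-distance contraction $D(\rho,\sigma)\geq D(\varepsilon(\rho),\varepsilon(\sigma))$ on which the preceding lemma rests holds for every such map---the lemma immediately yields the per-component bound $G(\rho_{AB}^{(i)})\geq G(\rho_A^{(i)}\otimes\rho_B^{(i)})$.

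The difficulty is to lift this to the mixture, and here the main obstacle sits. The channel $\mathcal{R}_B^{(i)}$ depends on $i$ through $\rho_B^{(i)}$, whereas a single local channel applied to $\rho_{AB}=\sum_i\gamma_i\rho_{AB}^{(i)}$ must act identically on every component; in fact $\rho_{AB}^{\vec{\gamma}\otimes}$ is a functional of the chosen pure-state decomposition and not of $\rho_{AB}$ alone, so no channel depending only on $\rho_{AB}$ can produce it. To carry the decomposition data along, I would adjoin a classical flag register $F$ on the $B$ side and form $\sigma_{ABF}=\sum_i\gamma_i\,\rho_{AB}^{(i)}\otimes|i\rangle\langle i|_F$. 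The controlled channel that applies $\mathcal{R}_B^{(i)}$ conditioned on $F=i$ is then one local trace-preserving operation across the $A\,|\,BF$ cut, and it sends $\sigma_{ABF}$ to $\sigma_{ABF}^{\otimes}=\sum_i\gamma_i\,\rho_A^{(i)}\otimes\rho_B^{(i)}\otimes|i\rangle\langle i|_F$; tracing out $F$ (again local) recovers $\rho_{AB}$ and $\rho_{AB}^{\vec{\gamma}\otimes}$ as the respective marginals.

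The genuinely hard step is to descend from the flagged inequalities to the unflagged claim. Applying the lemma to this construction gives $G(\sigma_{ABF})\geq G(\sigma_{ABF}^{\otimes})\geq G(\rho_{AB}^{\vec{\gamma}\otimes})$ together with $G(\sigma_{ABF})\geq G(\rho_{AB})$, but these place $G(\rho_{AB})$ and $G(\rho_{AB}^{\vec{\gamma}\otimes})$ on the same side of $G(\sigma_{ABF})$ and so do not by themselves order the two. Closing the gap needs a further property of the strength functional that the lemma does not supply: either the invariance $G(\rho_{AB})=G(\sigma_{ABF})$ under attaching a classically correlated flag, or an additivity of $G$ along the decomposition that lets the per-component bounds be summed. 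I would attempt the latter from the definition in Eq.~(\ref{seq}): for $\sigma_{ABF}$ the steered state on $BF$ is block-diagonal in the orthogonal flag, so the trace norm of its variation splits as a sum over blocks, whereas for $\rho_{AB}$ the triangle inequality only relates the coherent sum to that block-wise sum in one direction. Establishing that this comparison runs the way $G(\rho_{AB})\geq G(\rho_{AB}^{\vec{\gamma}\otimes})$ demands---equivalently, proving the required superadditivity of $G$ under the pure-state decomposition---is the principal obstacle and the point at which the argument must do real work rather than merely invoke the lemma.
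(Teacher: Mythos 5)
Your proposal does not prove the corollary, and you say so yourself: the argument stops at an acknowledged obstacle rather than at the conclusion. The per-component step is fine --- indeed $(I_A\otimes\mathcal{R}_B^{(i)})(\rho_{AB}^{(i)})=\rho_A^{(i)}\otimes\rho_B^{(i)}$, and since a product state has a constant steered state, $\delta\rho_B^M=0$, one even has $G(\rho_A^{(i)}\otimes\rho_B^{(i)})=0$, so that bound is trivial --- but the whole content of the corollary lies in the lift to the mixture, and your flagged construction provably cannot deliver it. The three inequalities that monotonicity yields, $G(\sigma_{ABF})\geq G(\sigma_{ABF}^{\otimes})\geq G(\rho_{AB}^{\vec{\gamma}\otimes})$ and $G(\sigma_{ABF})\geq G(\rho_{AB})$, place both quantities of interest below the same flagged quantity and never compare them with each other. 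The missing ingredient you name --- either $G(\sigma_{ABF})=G(\rho_{AB})$ under adjoining a decomposition-correlated flag, or superadditivity of $G$ over the blocks --- is essentially the corollary in disguise, and neither follows from Lemma 2. There is also a smaller mismatch worth flagging: Lemma 2 as stated covers only mixtures of local unitaries, which are unital maps; your trace-and-replace channels $\mathcal{R}_B^{(i)}$ are not of that form, so even the per-component step needs the stronger ``monotone under all local trace-preserving channels'' version. Your appeal to trace-norm contraction does justify that for the $A\rightarrow B$ component of $G$ (a channel on $B$ contracts $\delta\rho_B^M$ pointwise and leaves the weights $\mathrm{Tr}(\rho_A\hat{M}_A)$ unchanged), but not for the $B\rightarrow A$ component, where a channel on $B$ acts on the measurement side, replacing the projector $\hat{M}_B$ by the POVM element $\varepsilon_B^{*}(\hat{M}_B)$, which is not a contraction argument at all.

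That said, your diagnosis is accurate, and it is worth stating plainly: the paper itself has no proof of this corollary. Its entire justification is the remark that $\rho_{AB}$ ``contains more information'' than $\rho_{AB}^{\vec{\gamma}\otimes}$ and that one ``can obtain $\rho_{AB}^{\vec{\gamma}\otimes}$ from $\rho_{AB}$ but not the reverse,'' together with an appeal to Lemma 2; no local trace-preserving operation effecting $\rho_{AB}\mapsto\rho_{AB}^{\vec{\gamma}\otimes}$ is ever exhibited. Your observation that $\rho_{AB}^{\vec{\gamma}\otimes}$ is a functional of the chosen pure-state decomposition, not of the state alone, shows that no such operation (as a map on states of $AB$) can exist, so the paper's argument breaks at exactly the point where yours does; the only difference is that you located and named the hole. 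To close it one would have to establish one of the two auxiliary properties you identify, or find an argument that does not route through channel monotonicity at all.
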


The mark $\vec{\gamma}\otimes$ here is just an illustration of a direct product decomposition of $\rho_{AB}$ and we call $\rho_{AB}^{\vec{\gamma}\otimes}$ the productization of $\rho_{AB}$ for concision. It is not difficult to understand this corollary because the $\rho_{AB}$ contains more information than $\rho_{AB}^{\vec{\gamma}\otimes}$. We can obtains $\rho_{AB}^{\vec{\gamma}\otimes}$ from $\rho_{AB}$ but not the reverse.
Then, we define a half-positive-definite quantity
\begin{equation}
E(\rho_{AB})=C\inf\{\vec{\gamma}:G(\rho_{AB})-G(\rho_{AB}^{\vec{\gamma}\otimes})\},
\label{eglmt}
\end{equation}
where $\vec{\gamma}=\{\gamma_i,\rho_{AB}^{(i)}\}$ is a symbol of the set of productization. We can determine if $\rho_{AB}$ is separable, $E(\rho_{AB})=0$ and else $E(\rho_{AB})>0$.

Let us look at a special example. The nonlocal correlation strength of a pure qubit state $|\psi\rangle_{AB}=\cos\alpha|0,0\rangle+\sin\alpha \exp(\mathrm{i}\gamma)|1,1\rangle$ is $G(|\psi\rangle)=|\sin 2\alpha|$. Hence, the maximum of strength of QNC appears where the maximum  of entanglement appears. This is not surprising because for an arbitrary two-partite pure state, the form of the productization is determined, equaling to zero. Therefore $E(|\psi_{AB}\rangle)=G(|\psi_{AB}\rangle)$, namely the strength of QNC can be used as the measurement form of quantum entanglement in two-partite pure state.

For a totally mixed state $\rho_{AB}=1/2(|\psi_{+}\rangle\langle\psi_{+}|+|\psi_{-}\rangle\langle\psi_{-}|)$, where $|\psi_{+}\rangle$ and $|\psi_{-}\rangle$ are the Bell states, We get
\begin{equation}
\rho_{AB}^{\vec{\gamma}\otimes}=1/2(|00\rangle\langle00|+|11\rangle\langle11|).
\end{equation}
Therefore, $G(\rho_{AB})=\rho_{AB}^{\vec{\gamma}\otimes}$and the entanglement $E(\rho_{AB})=0$.

We should note that Eq.~(\ref{eglmt}) is usually hard to calculated because we still have not an efficient way to find out the supremum of $G(\rho_{AB}^{\vec{\gamma}\otimes})$ for a general state $\rho_{AB}$. However, it does not mean this definition is useless. Historically, the entanglement of formation had been also hard to calculated initially until the concurrence was proposed. Eq.~(\ref{eglmt}) supports an alternate way to research the quantum entanglement and its values needed further studied. In fact, it is different from previous theories, it is the formulation of the function integral and clearly shows the relationship between QNC and quantum entanglement.

The reason why we take the Eq.~(\ref{deft}) is that we are inspired by the words of Schr\"odinger. In terms of Schr\"odinger's words, in a composite correlated state, a subsystem will be \emph{steered} or \emph{piloted} into one or the other type of state if a local quantum measurement is done on the other subsystem \cite{sdg}. We think this ``steering'' can be seen as the corresponding change of one sub-state when other one be locally measured. It is namely the trace of $\vec{r}_B^{M}$. In fact, the trace of $\vec{r}_{B}^{M}$ will form a surface in a $(n_B^2-1)$-dimensional space when the projection operator $\hat{M}(\Theta,\Phi)$ ranges though the parameter-space $\Theta|=\{[0, \pi]\}$, $\Phi=\{0,2\pi\}$. Studying the surface of $\vec{r}_{B}^{M}$ can result in some new conclusions. For example, consider a $n_A\otimes 2$ composite separate state $\rho_{AB}=\sum_{i}^{m}a_i \rho_A^{(i)}\otimes\rho_B^{(i)}$ and with $\langle\psi_A^{(i)}|\psi_A^{(j)}\rangle=\delta_{ij}$. $\vec{r}_{B}^{M}$ will form a $m$-polyhedron in the Bloch sphere. This is very interesting, because for an inseparable state, $\vec{r}_{B}^{M}$ is usually smooth. In fact, the converse result is also correct under this situation. Namely, if $\vec{r}_{B}^{M}$ forms a $m$-polyhedron, the $\rho_{AB}$ must be expressed by the formula above (To get this conclusion we should use corollary~1). Additionally, a more general conclusion is shown as follows.

\begin{figure}
  % Requires \usepackage{graphicx}
  \includegraphics[width=0.5\textwidth]{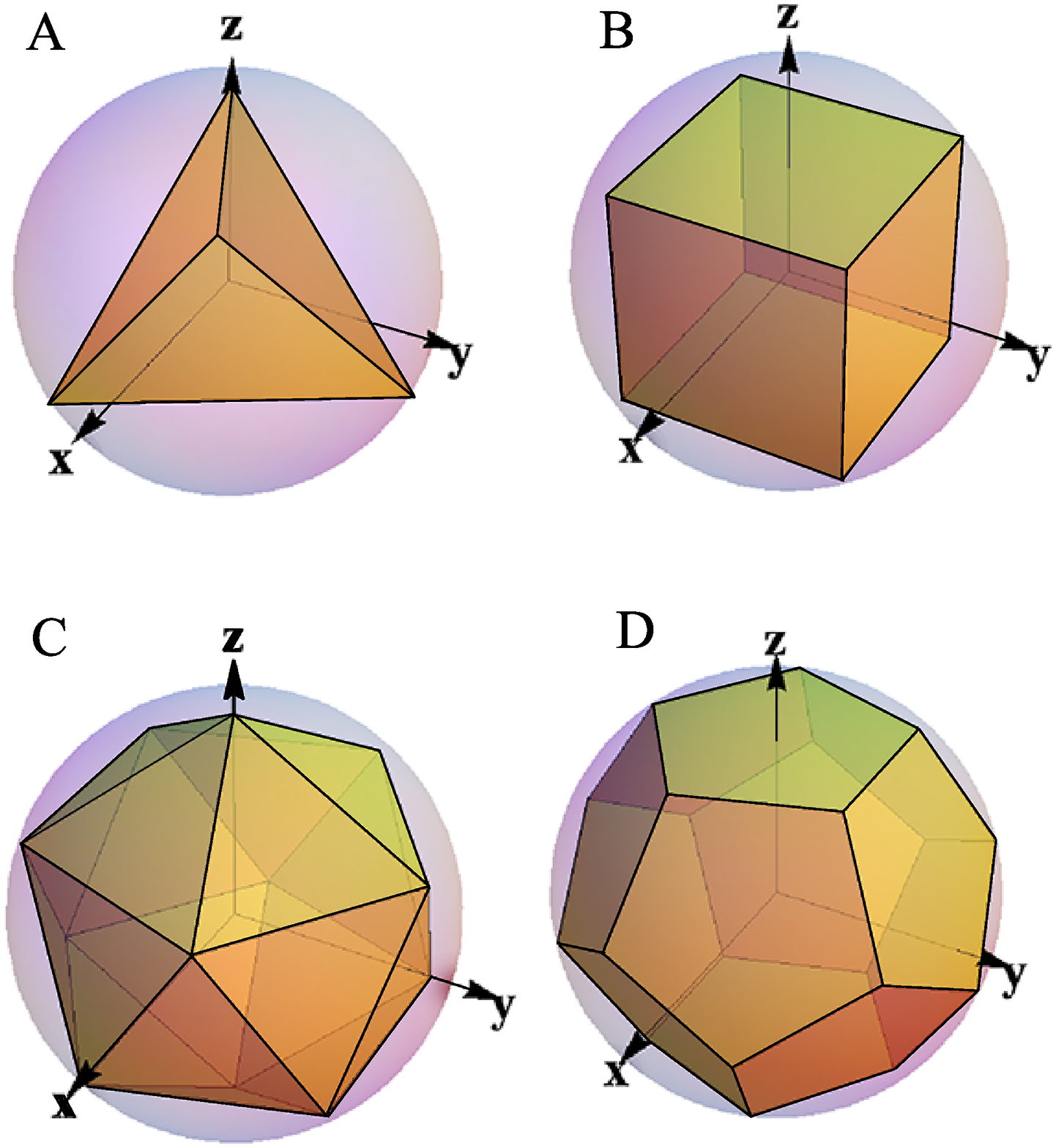}\\
  \caption{The surfaces of $\vec{r}_B^{M}$. We show the $\vec{r}_{B}^{M}$ of $\rho_{AB}=\sum_{i}^{m}a_i \rho_A^{(i)}\otimes\rho_B^{(i)}$ forms the $m$-polyhedron in the Bloch sphere. In this figure, $|\psi_A^{(i)}\rangle=|i\rangle_A$ and $|\psi_B^{(i)}\rangle=\cos(i\pi/m)|0\rangle_B+\sin(i\pi/m)|1\rangle_B$. $m$ are chosen as $4$,$8$,$12$ and $20$ in the picture $A$, $B$, $C$ and $D$ respectively.}\label{Fig1}
\end{figure}

\begin{theorem}
The sufficient and necessary condition for the $2\otimes n_B$ composite state $\rho_{AB}$ to be decomposed into $\rho_{AB}=\sum_{i}a_i\rho_A^{i}\otimes\rho_B^{i}$, where $\rho_A^{i}$ is real density matrix, is that the main normal line of $\mathrm{Tr}(\rho_A\hat{M})\vec{r}_B^{M}$ is constant in $n_B$-dimensional space.
\end{theorem}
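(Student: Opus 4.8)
The plan is to turn the geometric statement into one about the operator expansion of $\rho_{AB}$ on the qubit factor. Writing the qubit projector as $\hat{M}_A=\frac{1}{2}(I_A+\vec{m}\cdot\vec{\sigma})$ with $\vec{m}=(\sin\theta\cos\phi,\sin\theta\sin\phi,\cos\theta)$, and expanding $\rho_{AB}=\frac{1}{2}\sum_{\mu=0}^{3}\sigma_\mu\otimes Q_\mu$ with $\sigma_0=I_A$ and $Q_\mu=\mathrm{Tr}_A[(\sigma_\mu\otimes I_B)\rho_{AB}]$, a direct partial trace gives the unnormalized steered operator $\frac{1}{2}(Q_0+\sum_{a=1}^{3}m_a Q_a)$. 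Hence the object in the statement is the affine image of the Bloch sphere,
\[
\mathrm{Tr}(\rho_A\hat{M})\,\vec{r}_B^{\,M}=\vec{q}_0+\sum_{a=1}^{3}m_a\,\vec{q}_a ,\qquad \vec{m}\in S^2,
\]
where $\vec{q}_\mu$ is the $B$-Bloch vector of $\frac{1}{2}Q_\mu$. First I would record the elementary fact that this surface has a constant main normal line \emph{iff} it is flat (planar), i.e. iff the three correlation vectors $\vec{q}_1,\vec{q}_2,\vec{q}_3$ are linearly dependent. An $SO(3)$ rotation of the measurement axes (a local unitary on $A$, under which the steered family transforms covariantly, cf. the remarks following Theorem~1) then aligns the dependent direction with $\sigma_2$, so the condition becomes $\vec{q}_2=0$, equivalently $Q_2=0$, equivalently $\rho_{AB}^{T_A}=\rho_{AB}$.

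\textbf{Necessity} is then a one-line computation. If $\rho_{AB}=\sum_i a_i\,\rho_A^{i}\otimes\rho_B^{i}$ with each $\rho_A^{i}$ real, then each real qubit density matrix has its Bloch vector in the $xz$-plane, so $\mathrm{Tr}(\sigma_2\rho_A^{i})=0$ and $Q_2=\sum_i a_i\,\mathrm{Tr}(\sigma_2\rho_A^{i})\,\rho_B^{i}=0$. Thus $\vec{q}_2=0$, the surface does not depend on $m_2$, and its main normal line is the fixed direction orthogonal to $\mathrm{span}(\vec{q}_1,\vec{q}_3)$.

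\textbf{Sufficiency} is the substantive direction, and I would split it in two. Given $Q_2=0$, i.e. $\rho_{AB}^{T_A}=\rho_{AB}$, I would first argue that $\rho_{AB}$ is separable, and then upgrade any separable decomposition to a real one. The upgrade is automatic: if $\rho_{AB}=\sum_i a_i\rho_A^{i}\otimes\rho_B^{i}$ is any product decomposition, then symmetrizing with $\rho_{AB}=\frac{1}{2}(\rho_{AB}+\rho_{AB}^{T_A})$ replaces each $\rho_A^{i}$ by $\frac{1}{2}(\rho_A^{i}+(\rho_A^{i})^{T})=\mathrm{Re}\,\rho_A^{i}$, which is real, symmetric, and still positive (being the average of the two positive matrices $\rho_A^{i}$ and $(\rho_A^{i})^{T}$); this yields a decomposition with real $A$-parts. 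The recovery of such a decomposition from the planar steering data can be organized exactly as in the polyhedron remark preceding the theorem: Corollary~1 guarantees that the family $\{(\mathrm{Tr}(\rho_A\hat{M}),\rho_B^{M})\}$ determines $\rho_{AB}$, and a planar image lets one realize $\rho_{AB}$ as a convex mixture indexed by the real ($xz$-plane) qubit states.

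\textbf{Main obstacle.} The genuinely hard step is the implication $\rho_{AB}^{T_A}=\rho_{AB}\Rightarrow\rho_{AB}$ separable. For $n_B\le 3$ it is immediate, because $\rho_{AB}^{T_A}=\rho_{AB}\succeq0$ makes $\rho_{AB}$ PPT and the Horodecki criterion gives separability in $2\otimes2$ and $2\otimes3$; combined with the symmetrization above this closes the theorem in low dimension. For $n_B\ge4$ the same reduction only yields PPT, which no longer forces separability, so the planar-surface condition must be shown to be strictly stronger than PPT-ness (the surface encodes the full steered family, not merely positivity of $\rho_{AB}^{T_A}$). Making this quantitative---either ruling out a planar-surface PPT-entangled state, or exhibiting the explicit convex decomposition over the real qubit states---is where the real work lies and is the step I expect to be delicate.
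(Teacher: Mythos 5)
Your necessity direction is sound, and it is essentially the paper's own computation rewritten in invariant operator language: the paper parameterizes the real pure states $|\psi_A^{k}\rangle$ by angles, sets $\varphi_A^{k}=0$, and checks via the explicit cross product $\vec{r}_b^{\theta}\times\vec{r}_b^{\psi}$ that the normal direction is independent of $(\theta,\phi)$; your version (real $\rho_A^{i}$ force $Q_2=0$, so the steering surface is the planar patch $\vec{q}_0+m_1\vec{q}_1+m_3\vec{q}_3$) is the same fact. It is in the sufficiency direction that the two routes genuinely diverge, and there your proposal has the gap you yourself flag --- and it is not a small one. The paper never touches partial transposition: it claims (very sketchily) that a constant normal can be matched by a separable ansatz with suitable angles $\alpha_i$, that this determines the whole surface $\vec{r}_b$, and then invokes Corollary~1 (steering data determine the state) to identify $\rho_{AB}$ with that separable state. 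You instead reduce sufficiency to the implication ``$\rho_{AB}^{T_A}=\rho_{AB}$ implies $\rho_{AB}$ separable,'' which you can close only for $n_B\le 3$ via the Horodecki PPT criterion; for $n_B\ge 4$ you leave the implication open, so the theorem, which is asserted for all $n_B$, is not proven by your argument. As it happens, the missing step is a known theorem: B.~Kraus, J.~I.~Cirac, S.~Karnas and M.~Lewenstein, Phys.\ Rev.\ A {\bf 61}, 062302 (2000), prove that every $2\otimes N$ state invariant under partial transposition of the qubit is separable. Citing that result would complete your route (and arguably make it more rigorous than the paper's own sketch), but as written the proposal is incomplete precisely at its load-bearing step.

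Two further slips in your reduction deserve attention, because they sit exactly in the sentence you gloss over (``the condition becomes $\vec{q}_2=0$, equivalently $Q_2=0$''). First, the surface lives in the traceless-generator space of $B$, so planarity only yields $\vec{q}_2=0$ in some rotated frame, i.e.\ vanishing of the traceless part of $Q_2$; the trace part $\mathrm{Tr}(\sigma_2\rho_A)$ is invisible in the surface, so planarity is not literally equivalent to $Q_2=0$. Second, the $SO(3)$ rotation that aligns the degenerate direction with $\sigma_2$ redefines which qubit matrices count as ``real,'' so at best you obtain $\rho_A^{i}$ real in a rotated basis, not in the basis fixed by the statement. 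These are not pedantic points: take $n_B=2$ and three product terms whose $A$-Bloch vectors are pure and lie in the $xy$-plane at $120^{\circ}$ to one another, with generic distinct $\rho_B^{i}$; the surface is then a genuine flat patch with constant normal, yet $Q_2\neq 0$, so no decomposition with $\rho_A^{i}$ real in the original basis exists. So what your route (once completed by the cited result) actually proves is the corrected statement ``constant main normal iff $\rho_{AB}$ is separable with all $\rho_A^{i}$ real up to a common local unitary on $A$.'' That mismatch is as much a defect of the theorem as stated --- and of the paper's proof, whose converse never verifies that matching the normal forces matching of the surface --- as of your write-up, but a complete solution should say so explicitly rather than pass over it.
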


\begin{proof}
Let $\vec{r}_b=\mathrm{Tr}(\rho_A\hat{M})\vec{r}_B^{M}=\sum_{k}a_k\lambda_k\vec{r}_B^{(k)}$, where $\lambda_k=\mathrm{Tr}(|\psi_{A}^{k}\rangle\langle\psi_{A}^{k}|\hat{M})$. If $\rho_{AB}=\sum_{i}a_i\rho_A^{i}\otimes\rho_B^{i}$, $\lambda_{k}$ can be expressed as
\begin{equation}
\lambda_k=\cos(\theta-\alpha_A^{k})^2+\sin2\theta\sin2\alpha_A^k\sin(\varphi/2-\varphi_A^k/2)^2.
\end{equation}
Moreover, $\rho_A$ is real, $\varphi_A^{k}=0$, hence
\begin{equation}
\parallel\vec{r}_b^\theta \times \vec{r}_b^\psi\parallel=\parallel\sum_{i,j}\sin2(\alpha_i-\alpha_j)\vec{r}_B^{(i)}\times\vec{r}_B^{(j)}\parallel,
\end{equation}
where $\parallel\vec{r}\parallel$ means the normalization of $\vec{r}$. $\vec{r}_B^{(k)}$ is independent of $\theta$ and $\phi$, therefore $\parallel\vec{r}_b^\theta \times \vec{r}_b^\psi\parallel=cons.$  Namely, the direction of the main normal line does not change with the variables $\phi$ and $\theta$. Conversely, if the main normal line $\vec{r}_{\mathbf{n}}$ is constant, it can be rewritten as $\vec{r}_{\mathbf{n}}=\parallel\sum_{i,j}\sin2(\alpha_i-\alpha_j)\vec{r}_B^{(i)}\times\vec{r}_B^{(j)}\parallel $ when $\alpha_i$ and $\alpha_j$ are appropriately chosen. Therefore, $\vec{r}_b$ can be determined. consequently, according to corollary~1,  $\rho_{AB}$ is a separable state.
\end{proof}

\section{Discussion and conclusion} In this above work, we calculate the characteristic function and the strength of QNC. We regard the characteristic function as a corresponding function and through the trace distance, differential theory is brought into QNC research. It supports an alternate way to analysis and categorize the QNC. In fact, as a special sort of QNC, quantum entanglement is defined and shown in Eq.~(16). This definition is shown a more definite and clear relationship between the quantum entanglement and QNC. We also show a way to distinguish whether a real mixed composite density matrix is separable or not.

However, it should be reminded that the monotonic character of our new definition of quantum entanglement under LOCC has not rigorously proven yet although we have done some computation to show that it is correct. The productization density matrix $\rho_{AB}^{\vec{\gamma}\otimes}$ is brought into this Letter. It seems that the supremum of it only possesses the ``local'' correlation of $\rho_{AB}$. It is can be seen that the definition $E_s(\rho_{AB})=\mathrm{inf}\{\vec{\gamma}:S(\rho_{AB}^{\vec{\gamma}\otimes})-S(\rho_{AB})\}$,
is also a possible entanglement measurement definition, where $S(\rho_{AB})$ is the Shannon entropy of $\rho_{AB}$ here. This work was supported by the National Basic Research Program of China (973 Program) grant No. G2009CB929300.

\end{document}